\theoremstyle{plain}
\newtheorem*{proposition}{Proposition}
\title{Simple equations method (SEsM)  and some of its numerous particular cases}
\author{Nikolay K. Vitanov$^1$, Zlatinka I. Dimitrova$^2$}
\date{$^1$Institute of Mechanics, Bulgarian Academy of Scienecs, Acad. G. Bonchev Str., Bl. 4, 1113 Sofia, Bulgaria \\
$^2$ "G. Nadjakov" Institute of Solid State Physics, Bulgarian Academy of Sciences, Bldv. Tzarigradsko Chaussee 72, 1784 Sofia, Bulgaria}
\begin{document}
\maketitle

\begin{abstract}
We discuss a  new version of a method for obtaining exact solutions of nonlinear partial differential equations. We  call this method  the
Simple Equations Method (SEsM). The method is based on representation of the searched solution as function 
of solutions of one or several  simple equations. We show that SEsM
contains as particular case the Modified Method of Simplest Equation, G'/G - method, Exp-function method,
Tanh-method and the method of Fourier series for obtaining exact and approximate solutions of linear
differential equations. These methods are only a small part of the large amount of methods that are 
particular cases of the methodology of SEsM.
\end{abstract}

\section{Introduction}
Complex systems attract the attantion of numerous researchers today  \cite{holmesx} - \cite{cs5}. 
Many of these systems are nonlinear \cite{n1} - \cite{n8} and  the nonlinearity is investigated by 
various methodologies such as  time series analysis or by means of models based on differential equations
\cite{t1} - \cite{t10}. In numerous cases the model equations are nonlinear partial differential equations
and one is interested in obtaining exact analytical solutions of these equations. The  research on the
methodology for obtaining such exact solution was connected  several decades ago to the attempts to avoid the nonlinearity by
appropriate transformation of the solved differential equation, e.g., by the Hopf-Cole transformation \cite{hopf}, \cite{cole}  which transforms the nonlinear Burgers 
equation to the linear heat equation.  The study of  such transformations led to the
\emph{Method of Inverse Scattering Transform} \cite{ablowitz} - \cite{gardner} and to 
the direct method of Hirota for obtaining of  exact solutions of NPDEs -  \emph{Hirota method} \cite{hirota}, \cite{hirota1} which
is based on bilinearization of the solved nonlinear partial differential equation by means of appropriate transformations. Truncated Painleve expansions may lead to many of these appropriate transformations \cite{tabor} - \cite{wtk}. Following this observation  Kudryashov \cite{k3} studied the possibility
for obtaining exact solutions of NPDEs by a truncated Painleve expansion where the truncation happens after the "constant term" (i.e., the constant term is kept in the expansion).  He  formulated   the \emph{Method of Simplest Equation (MSE)} \cite{k05} based 
on determination of singularity order $n$ of the solved NPDE and searching of
particular solution of this equation as series containing powers of solutions
of a simpler equation called \emph{simplest equation}. The methodology further developed in \cite{kl08}
and applied  for obtaining traveling wave solutions of  nonlinear partial differential equations
(see, e.g., \cite{k5a} - \cite{k12}). 
\par
Here we shall discuss the methodology of  the \emph{Simple Equations Method} (SEsM). 
Some elements of this methodology can be observed in
our articles written almost three decades years ago \cite{mv1} - \cite{mv5}. 
Our research continued in  2009
\cite{1}, \cite{2} and in  2010 we have used the ordinary differential 
equation of Bernoulli as simplest equation \cite{v10} and  applied the methodology of the
Modified Method of Simplest Equation to ecology
and population dynamics \cite{vd10}. In these publications we have used the concept of the balance equation. 
Note that the version  called \emph{Modified Method of Simplest Equation - MMSE} \cite{vdk}, \cite{v11} based on determination of the kind
of the simplest equation and truncation of the series of solutions of the simplest equation by means 
of application of a balance equation is equivalent of the \emph{Method of Simplest Equation}.
One can consider the \emph{Modified Method of Simplest Equation} as a version of the \emph{Method
of Simplest Equation}. Up to now our contributions to the methodology and its application have been
connected to the \emph{MMSE} \cite{v11a} - \cite{vdv17}. We note especially the article \cite{vdv15} where we have extended the methodology 
of the \emph{MMSE} to simplest equations of the class
\begin{equation}\label{sf}
\left (\frac{d^k g}{d\xi^k} \right)^l = \sum \limits_{j=0}^{m} d_j g^j
\end{equation}
where $k=1,\dots$, $l =1,\dots$, and $m$ and $d_j$ are parameters. The solution of Eq.(\ref{sf}) defines
a special function that contains as particular cases, e.g.,: (i) trigonometric functions; (ii) hyperbolic functions; (iii) elliptic functions of Jacobi; (iv) elliptic function of Weierstrass.
\par 
In the course of time our studies have been directed to the goal to extend the methodology of the 
Modified Method of Simplest Equation. The last modification of the modified method of simplest equation 
is connected to the possibility of use of more than one simplest equation. This modification can be called   
MMSEn (Modified Method of Simplest equation based on $n$ simplest equations)
but it is better to call it SEsM - Simple Equations Method. The reason for this is that the used simple equations 
are more simple than the solved nonlinear partial differential equation but these simple equations in fact can be 
quite complicated.   A variant of SEsM based on two simplest equation was applied in \cite{vd18}
and the first description of the methodology was made in \cite{v19} and then
in \cite{v19a} and \cite{v19b}. For more applications of particular cases of the methodology see
\cite{n17}, \cite{iv19}. 
\par
The text below is organized as follows. We describe the SEsM methodology in Sect. 2.
In Sect. 3 we prove several propositions in order to show that several
frequently used methods for obtaining exact solutions of nonlinear partial differential
equations as well as the method of Fourier series for obtaining exact and approximate solutions
of linear partial differential equations are particular cases of the SEsM methodology. Several
concluding remarks are summarized in Sect. 4.
\section{The Simple Equations Method (SEsM)}

Let us consider a nonlinear partial differential equation 
\begin{equation}\label{eqx}
{\cal{DE}}(u,\dots)=0
\end{equation}
where ${\cal{DE}}(u,\dots)$ depends on the function $u(x,...,t)$
and some of its derivatives ($u$ can be a function of more than 1 spatial coordinate).
The steps of the methodology of SEsM are as follows.
\begin{description}
	\item[Step 1.]
	In order to transform the nonlinearity of the solved equation to
	more treatable kind of nonlinearity, e.g., polynomial nonlinearity or even in order to remove the nonlinearity (if possible) we perform a transformation
	\begin{equation}\label{m1}
	u(x,\dots,t)=T(F(x,\dots,t))
	\end{equation}
	where $T(F)$ is some function of another function  $F$. In general
	$F(x,\dots,t)$ is a function of the spatial variables as well as of the time. 
	No general form of this transformation is known up to now.
	The transformation $T(F)$ can be
	\begin{itemize}
		\item the Painleve expansion \cite{hirota}, \cite{k3}, 
		\item $u(x,t)=4 \tan^{-1}[F(x,t)]$ for the case of the 
		sine - Gordon equation ,
		\item  $u(x,t) = 4 \tanh^{-1}[F(x,t)]$ for the case of sh-Gordon (Poisson-Boltzmann 
		equation) (for applications of the last two transformations see, e.g. \cite{mv1} - \cite{mv5}),
		\item another transformation.
	\end{itemize}
	In many particular cases one may skip this step (then we have just $u(x,\dots,t)=F(x,\dots,t)$) 
	but in numerous cases the step is necessary
	for obtaining a solution of the studied nonlinear PDE. The application of Eq.(\ref{m1}) to 
	Eq.(\ref{eqx}) leads to a nonlinear PDE for the function $F(x,\dots,t)$.
	\item[Step 2.]
	The function $F(x,\dots,t)$ is represented as a function of other functions $f_1,\dots,f_N$.
	These functions are  connected to solutions of some differential equations which can be partial 
	or ordinary differential equations) that are more simple than Eq.(\ref{eqx}). We note that 
	the possible values of $N$ are $N=1,2,\dots$ (there may be infinite number of functions $f_i$ too).
	No general form of the function $F(f_1,\dots,f_N)$ is known up to now.
	The forms of the function $F(f_1,\dots,f_N)$ can be different, e.g.,
	\begin{itemize} 
		\item
		\begin{eqnarray}\label{m2}
		F &=& \alpha + \sum \limits_{i_1=1}^N \beta_{i_1} f_{i_1} + \sum \limits_{i_1=1}^N  \sum \limits_{i_2=1}^N 
		\gamma_{i_1,i_2} f_{i_1} f_{i_2} + \dots + \nonumber \\
		&&\sum \limits_{i_1=1}^N \dots \sum \limits_{i_N=1}^N \sigma_{i_1,\dots,i_N} f_{i_1} \dots f_{i_N}
		\end{eqnarray}
		where $\alpha,\beta_{i_1}, \gamma_{i_1,i_2}, \sigma_{i_1,\dots,i_N}\dots  $ are parameters.
		\item
		or $F(f_1,\dots,f_N)$ can have another form
	\end{itemize}
	We shall use the form from Eq.(\ref{m2}) below. Note that 
	\begin{itemize}
		\item 
		the relationship (\ref{m2}) contains as particular case the 
		relationship used by Hirota \cite{hirota}. 
		\item
		The power series $\sum \limits_{i=0}^N \mu_n f^n$ (where
		$\mu$ is a parameter) used in the previous versions of the methodology based on 1 simple equation (and called  Modified Method of Simplest Equation) are a particular case of the relationship (\ref{m2}) too.
	\end{itemize}
	\item[Step 3.] 
	In general the functions $f_1,\dots,f_N$ are solutions of partial differential equations. These equations are more simple than the
	solved nonlinear partial differential equation. There are two
	possibilities
	\begin{itemize} 
		\item 
		One may use solutions of the simple partial differential equations if such solutions are available or
		\item 
		one transforms the more simple partial differential equations 
		by means of appropriate ans{\"a}tze (e.g.,  traveling-wave ans{\"a}tze such as 
		$\xi = \hat{\alpha} x + \hat{\beta} t$;  
		$\zeta = \hat{\mu} y + \hat{\nu}t \dots$). Then 
		the solved   differential equations for $f_1,\dots,f_N$ may be reduced to   differential equations 
		$E_l$, containing derivatives of one or several functions
		\begin{equation}\label{i1}
		E_l \left[ a(\xi), a_{\xi},a_{\xi \xi},\dots, b(\zeta), b_\zeta, b_{\zeta \zeta}, \dots \right] = 0; \ \
		l=1,\dots,N
		\end{equation}
	\end{itemize}
	In numerous  cases (e.g, if the equations for the functions $f_1,\dots$ are ordinary differential equations) one may skip this step 
	but the step may be necessary if the equations for $f_1,\dots$ are complicated partial differential equations.
	\item[Step 4.]
	We assume that	the functions $a(\xi)$, $b(\zeta)$, etc.,  are  functions of 
	other functions, e.g., $v(\xi)$, $w(\zeta)$, etc., i.e.
	\begin{equation}\label{i1x}
	a(\xi) = A[v(\xi)]; \ \ b(\zeta) = B[w(\zeta)]; \dots
	\end{equation} 
	Note that the kinds of the functions $A$ , $B$, $\dots$ are not prescribed. 
	Often one uses a finite-series relationship, e.g., 
	\begin{equation}\label{i2}
	a(\xi) = \sum_{\mu_1=-\nu_1}^{\nu_2} q_{\mu_1} [v (\xi)]^{\mu_1}; \ \ \ 
	b(\zeta) = \sum_{\mu_2=-\nu_3}^{\nu_4} r_{\mu_2} [w (\zeta)]^{\mu_2}, \dots 
	\end{equation}
	where $q_{\mu_1}$, $r_{\mu_2}$, $\dots$ are coefficients.
	However other kinds of relationships may be used too. 
	\item[Step 5.]
	The functions  $v(\xi)$, $w(\zeta)$, $\dots$ 
	are solutions of simple ordinary differential equations.
	For about a decade we have used particular case of the described 
	methodology that was based on use of one simple equation. This simple equation was called simplest equation and the
	methodology based on one equation was called Modified Method of Simplest Equation. SEsM contains the Modified Method of Simplest Equation as  particular cease (as one of the numerous particular cases of the SEsM methodology).
	\item[Step 6.]
	The application of the steps 1. - 5. to Eq.(\ref{eqx}) leads to change of the left-hand side of this equation. We consider the case when the result of this change  is a function that is a sum of terms where each 
	term contains some function multiplied by a coefficient. This coefficient contains some of the 
	parameters of the solved equation and some of the parameters of the solution. In the most cases
	a balance procedure must be applied in order to ensure that the above-mentioned relationships
	for the coefficients contain more than one term ( e.g., if the result of the transformation 
	is a polynomial then the balance procedure has to ensure that the coefficient of each 
	term of the polynomial is a relationship that contains at least two terms).
	This balance procedure may lead to one or more additional relationships among the parameters 
	of the solved equation and parameters of the solution. The last relationships are called 
	\emph{balance equations}. 
	\item[Step 7.]
	We may obtain a nontrivial solution of Eq. (\ref{eqx})  if all coefficients mentioned in Step 6. are
	set to $0$. This leads in the most cases  to a system of nonlinear algebraic equations for the 
	coefficients of the solved nonlinear PDE and for the coefficients of the solution. Any nontrivial 
	solution of this algebraic system leads to a solution the studied  nonlinear partial differential 
	equation. Usually the above system of algebraic equations contains many equations that have to 
	be solved with the help of   a computer algebra system. The algebraic system however can be
	complicated enough and then it is not possible to solve it even by means of a computer algebra system. 
	In this case we can not obtain exact solution of the solved nonlinear partial differential equation
	by SEsM but ne can try to solve the algebraic system numerically and if this is successful then
	SEsM will lead to a numerical solution to the solved nonlinear partial differential equation.  
\end{description}
Below we shall show that several much used methods for obtaining exact and approximate solutions
of differential equations are particular cases of SEsM.
\section{The other methods as particular cases of the SEsM methodology}
Many of the other methodologies for obtaining exact solutions of the nonlinear partial differential equations 
are particular cases of the SEsM. In the following subsections we prove several propositions
about this. 
\subsection{$G'/G$ method as particular case of SEsM methodology}
We shall start with the $G'/G$-method. This method is as follows.
We have a nonlinear partial differential equation ${\cal{DE}}^*u(x,t)=0$ where
${\cal{DE}}^*$ is the corresponding differential operator. The ansatz $\xi =
\alpha x + \beta t$ ($\alpha$ and $\beta$ are parameters) reduces the solved differential equation to the ODE
${\cal{DE}}u(\xi)=0$ and the solution of this equation is searched as
\begin{equation}\label{c1}
u(\xi) = \sum \limits_{m=0}^{M} \alpha_m \left[\frac{G'(\xi)}{G(\xi)} \right]^{m}
\end{equation}
where $\alpha_m$ are parameters, $G'= dG/d\xi$ and the function $G(\xi$)
is a solution to the linear ordinary differential equation ($\lambda$ and $\mu$ are parameters)
\begin{equation}\label{c2}
\frac{d^2 G}{d \xi^2} + \lambda \frac{dG}{d \xi} + \mu G =0
\end{equation}
We shall denote this method $(G'/G)_1$-method. Dimitrova \cite{dim12} proposed a
generalization of this method. We shall call this generalization $(G'/G)_n$-method. 
The $(G'/G)_n$ - method is as follows. Let $G''/G$ be a polynomial of $G'/G$
\begin{equation}\label{c3}
\frac{G''}{G}= \sum \limits_{n=0}^N \beta_n \left(\frac{G"}{G} \right)^n
\end{equation}
where $\beta_n$ are parameters. Then we obtain the following chain of methods
\begin{description}
	\item[1.)] The $(G'/G)_1$-method (The convetional $G'/G$-method). Here $N=1$ and
\begin{equation}\label{c4}
\frac{G''}{G} = \beta_1 \frac{G'}{G} + \beta_0
\end{equation} 
We set here $\lambda = - \beta_1$ and $\mu=- \beta_0$ and obtain Eq.(\ref{c2}).
	\item[2.)] The $(G'/G)_2$-method . Here $N=2$ and
\begin{equation}\label{c5}
\frac{G''}{G} = \beta_2 \left( \frac{G'}{G}\right)^2 + \beta_1 \frac{G'}{G} + \beta_0
\end{equation} 	
\item[3.)] The $(G'/G)_3$-method . Here $N=3$ and
\begin{equation}\label{c6}
\frac{G''}{G} = \beta_3\left( \frac{G'}{G}\right)^3 + \beta_2 \left( \frac{G'}{G}\right)^2 + \beta_1 \frac{G'}{G} + \beta_0
\end{equation} 	
\begin{center}
	$\dots$
\end{center}
\item[n.)] The $(G'/G)_n$-method . Here $N=n$ and
\begin{equation}\label{c7}
\frac{G''}{G} = \beta_n\left( \frac{G'}{G}\right)^n + \dots + \beta_3\left( \frac{G'}{G}\right)^3 + \beta_2 \left( \frac{G'}{G}\right)^2 + \beta_1 \frac{G'}{G} + \beta_0
\end{equation} 
\end{description}
Let us now show that the chain of the $(G'/G)_N$-methods is particular case of the SEsM methodology.
\begin{proposition}
The chain of the $(G'/G)_N$-methods is particular case of the SEsM for the case of 1 simplest equation of the
kind 
$$
\frac{dw}{d\xi} = -w^2  + \sum \limits_{j=0}^N \beta_j w^j,
$$
and for the case of solution of the kind
$$
u(\xi) = \sum \limits_{m=0}^M \alpha_m w^m
$$
\end{proposition}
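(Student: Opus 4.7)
The plan is to exhibit an explicit dictionary between the ingredients of the $(G'/G)_N$-method and the ingredients of SEsM as described in Steps 1--7, and then verify that the two procedures yield the same reduced algebraic system. Since the $(G'/G)_N$-method already starts with a nonlinear PDE and a traveling-wave reduction $\xi=\alpha x+\beta t$, I would use the trivial transformation $T(F)=F$ in Step 1 (so $u(x,t)=F(\xi)$), and in Step 3 I would invoke the same traveling-wave ansatz as the reduction of the simple equation to an ODE. This pins down Steps 1 and 3 immediately.

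The substantive part is the identification of the single function $w$ of Step 5 with the ratio $G'/G$ of the $(G'/G)_N$-method. Concretely, I would set $f_1=w$ with
\begin{equation*}
w(\xi)=\frac{G'(\xi)}{G(\xi)},
\end{equation*}
and compute
\begin{equation*}
\frac{dw}{d\xi}=\frac{G''}{G}-\left(\frac{G'}{G}\right)^{2}=\frac{G''}{G}-w^{2}.
\end{equation*}
Substituting the $(G'/G)_N$-hypothesis $G''/G=\sum_{n=0}^{N}\beta_{n}(G'/G)^{n}$ into this identity gives exactly
\begin{equation*}
\frac{dw}{d\xi}=-w^{2}+\sum_{j=0}^{N}\beta_{j}w^{j},
\end{equation*}
which is the simple equation stated in the proposition. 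Conversely, if $w$ satisfies this ODE and one \emph{defines} $G$ by $G(\xi)=\exp\!\left(\int w\,d\xi\right)$, then $w=G'/G$ and $G$ satisfies the $(G'/G)_N$-relation for the same parameters $\beta_j$. This gives the required equivalence of the auxiliary equations.

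For Steps 2 and 4 of SEsM I would take the function $F$ to be a polynomial in the single function $f_1=w$, i.e. $F=\sum_{m=0}^{M}\alpha_{m}w^{m}$; this is the diagonal specialization of the general multilinear form in Eq.\,(\ref{m2}) to $N=1$, and it coincides verbatim with the $(G'/G)_N$ ansatz (\ref{c1}) once the substitution $w=G'/G$ is used. Steps 6 and 7 (balance equation, collection of powers of $w$, vanishing of each coefficient) then apply uniformly in both frameworks and produce identical algebraic systems, hence identical solutions. I do not foresee a real obstacle here: the only point requiring care is to make the translation uniform in $N$ (so that the family $(G'/G)_1,(G'/G)_2,\dots$ corresponds to the family of simple equations parameterized by $N$), but this follows because both the $(G'/G)_N$-constraint (\ref{c7}) and the SEsM simple equation are polynomial in $w=G'/G$ of the same degree $N$ with the same coefficients $\beta_j$.
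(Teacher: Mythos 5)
Your proposal is correct and rests on exactly the same key identity as the paper's proof, namely the substitution $w=G'/G$ together with $\frac{dw}{d\xi}=\frac{G''}{G}-w^{2}$, which converts the simple equation (\ref{c8}) into the $(G'/G)_N$-relation (\ref{c7}). The extra material you supply (the converse construction $G=\exp\int w\,d\xi$ and the explicit matching of Steps 1--7) is a harmless elaboration of the same argument.
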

\begin{proof}
Let us consider the methodology of the Simple Equations Method (SEsM) for the case of 1 simple equation.
Let the simple equation be
\begin{equation}\label{c8}
\frac{dw}{d\xi} = -w^2 +  \sum \limits_{j=0}^N \beta_j w^j,
\end{equation}
and let us search for the solution of the solved nonlinear differentioal equation ${\cal{DE}}u(\xi)=0$ which
is a polynomial of the solution of the simple equation (\ref{c8}), i.e.,
\begin{equation}\label{c9}
u(\xi) = \sum \limits_{m=0}^M \alpha_m w^m
\end{equation}
Let us now set
\begin{equation}\label{c10}
\frac{G'}{G} = w
\end{equation}
Then $\frac{G''}{G}=w' + w^2$ and the simple equation (\ref{c8}) is transformed to
\begin{equation}\label{c11}
\frac{G''}{G} = \beta_N\left( \frac{G'}{G}\right)^n + \dots + \beta_3\left( \frac{G'}{G}\right)^3 + \beta_2 \left( \frac{G'}{G}\right)^2 + \beta_1 \frac{G'}{G} + \beta_0
\end{equation} 
which contains as particular cases the equations of the chain of methods $(G'/G)_1$, $(G'/G)_2$, $\dots$,
$(G'/G)_n$, $\dots$.
\end{proof}
Let us write the simple equation Eq.(\ref{c8}) for the first members of the chain of the $(G'/G)_N$=methods.
\begin{description}
	\item[1.)] $N=1$: $\frac{dw}{d \xi} = - w^2 + \beta_1 w + \beta_0$ which is a version of the Riccati
	equation. This relationship was established by Kudryashov \cite{k10}
	\item[2.)] $N=2$: $\frac{dw}{d \xi} = (\beta_2- 1) w^2 + \beta_1 w + \beta_0$ which is again a version of the Riccati equation. 
\item[3.)] $N=3$: $\frac{dw}{d \xi} = \beta_3 w^3+ (\beta_2- 1) w^2 + \beta_1 w + \beta_0$. This equation can be reduced to a version of the equation of Bernoulli ($\beta_2=1$, $\beta_0=0$). 
\item[4.)] $N=4$: $\frac{dw}{d \xi} = \beta_4 w^4 + \beta_3 w^3+ (\beta_2- 1) w^2 + \beta_1 w + \beta_0$. This equation can be reduced to a version of the equation of Bernoulli ($\beta_2=1$, $\beta_0=0$). 
\end{description}
\subsection{Exp-function method as particular case of the SEsM methodology}
Within the scope of the Exp-function method \cite{efm} the solution of the solved nonlinear equation ${\cal{DE}}u(\xi)=0$ ($\xi = \alpha x + \beta t$) is searched as
\begin{equation}\label{d1}
u(\xi) = \frac{\sum \limits_{i=0}^m a_i \exp(i \xi)}{\sum \limits_{j=0}^n b_j \exp(j \xi)}
\end{equation}
\begin{proposition}
Let $k=max(m,n)$. Then the Exp-function method is a particular case of the SEsM methodology for the
case of $k$ simple equations of the kind $\frac{df_l}{d \xi} = l f_l $, $l=0,1,\dots, k$ and
solution of the differential equation searched as a function of the kind
$$ u(\xi) = \frac{\sum \limits_{i=0}^m [a_i f_i(\xi)/\exp(\xi_{0i})]}{\sum \limits_{j=0}^n [b_j f_j(\xi)/exp(\xi_{0j})]}$$ where $\xi_{0i}$ and $\xi_{0j}$ are constants of integration.
\end{proposition}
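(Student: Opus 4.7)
The plan is to follow the SEsM recipe described in Section 2 with the specific choices listed in the proposition and show that the resulting ansatz is literally the Exp-function ansatz in Eq.~(\ref{d1}). The main work is not any deep calculation but rather checking that each simple equation has the exponential solution one expects and that the normalization by $\exp(\xi_{0l})$ absorbs the integration constants in the right way.

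First, I would invoke Step~5 of SEsM with $k = \max(m,n)$ simple equations
$$
\frac{df_l}{d\xi} = l f_l, \qquad l = 0, 1, \dots, k.
$$
Each of these is a first-order linear ODE with general solution $f_l(\xi) = A_l \exp(l\xi)$, where $A_l$ is the constant of integration. Parametrizing $A_l = \exp(\xi_{0l})$ (valid whenever $A_l > 0$; the sign can be absorbed into the coefficients $a_i, b_j$), one obtains $f_l(\xi) = \exp(\xi_{0l})\exp(l\xi)$, hence the normalization
$$
\frac{f_l(\xi)}{\exp(\xi_{0l})} = \exp(l\xi).
$$
The $l=0$ case deserves a separate mention: then $f_0$ is just a constant and the normalization gives $1 = \exp(0\cdot\xi)$, as required.

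Next, I would carry out Step~2 of SEsM, choosing as the function $F(f_0,\dots,f_k)$ the rational combination prescribed in the proposition,
$$
u(\xi) = \frac{\sum_{i=0}^{m} a_i f_i(\xi)/\exp(\xi_{0i})}{\sum_{j=0}^{n} b_j f_j(\xi)/\exp(\xi_{0j})}.
$$
Substituting the explicit form of $f_l$ obtained above immediately turns this expression into
$$
u(\xi) = \frac{\sum_{i=0}^{m} a_i \exp(i\xi)}{\sum_{j=0}^{n} b_j \exp(j\xi)},
$$
which is exactly the Exp-function ansatz~(\ref{d1}). Steps~6 and~7 of SEsM (balance and determination of coefficients) then produce the same algebraic system as the Exp-function method, since they operate on the same functional form.

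The only subtle point, and the one I would treat most carefully, is the bookkeeping between the index range $l = 0, \dots, k$ of the simple equations and the possibly different ranges $i = 0, \dots, m$ and $j = 0, \dots, n$ that appear in the numerator and denominator. Because $k = \max(m,n)$, the set $\{f_0,\dots,f_k\}$ supplies every exponential needed by either sum; the coefficients $a_i$ or $b_j$ that do not appear in~(\ref{d1}) are simply set to zero in the SEsM representation. Beyond this indexing check, no real obstacle arises, and the form $F$ used here is of a different (rational) shape than the polynomial form~(\ref{m2}) highlighted as the running example, which is precisely why SEsM admits the Exp-function method only under the "other form" option listed explicitly in Step~2.
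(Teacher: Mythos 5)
Your proposal is correct and follows essentially the same route as the paper: solve each simple equation $\frac{df_l}{d\xi}=lf_l$ to get $f_l=\exp(l\xi+\xi_{0l})$, divide out the integration constant, and substitute into the prescribed rational form to recover the Exp-function ansatz (\ref{d1}). Your additional remarks on the $l=0$ case, the sign of the integration constant, and the indexing with $k=\max(m,n)$ are minor refinements of the same argument rather than a different approach.
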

\begin{proof}
The solution of the simple equation $\frac{df_l}{d \xi} = l f_l $, $l=0,1,\dots, k$ is $f_l = \exp(l \xi + \xi_{0l})$ where $\xi_{0l}$ is a constant of integration. The substitution of this solution in 
$$ u(\xi) = \frac{\sum \limits_{i=0}^m [a_i f_i(\xi)/\exp(\xi_{0i})]}{\sum \limits_{j=0}^n [b_j f_j(\xi)/\exp(\xi_{0j})]}$$ leads to Eq.(\ref{d1}). Thus the  Exp-function method is a particular case of the SEsM methodology.
\end{proof}
\par 
We can construct numerous extensions of the Exp-function method of the basis of the SEsM methodology.
Let us show one of them based on generalization of the function $u(\xi)$. Let us consider the SEsM methodlogy based on the
	$k=max(m,n)$ simple equations $\frac{df_l}{d \xi} = l f_l $, $l=0,1,\dots, k$ and on the function
	$$
	u(\xi) = \sum \limits_{k=0}^K A_k \left[ \frac{\sum \limits_{i=0}^m [a_i f_i(\xi)/\exp(\xi_{0i})]}{\sum \limits_{j=0}^n [b_j f_j(\xi)/\exp(\xi_{0j})]} \right]^{B_k}
	$$ where $A_k$ and $B_k$ are parameters. For $K=0$, $A_0=1$ and $B_0=1$ we obtain the conventional version of the exp-function method.
\par 
SEsM methodology allows for many other generalizations of the exp-function method. For another example of
generalization of the exp-function method see \cite{v11a}.
\subsection{Tanh-method as particular case of the SEsM methodology}
The Tanh-method \cite{malf} is as follows. One searches for solution of the nonlinear equation ${\cal{DE}}u(\xi)=0$ as follows:
\begin{description} 
	\item[1.)] Case without boundary conditions:
\begin{equation}\label{e1}
u(\xi) = \sum \limits_{i=0}^N a_i v(\xi)^i, \ \ \ v(\xi) = \tanh(\xi)
\end{equation}
\item[2.)] Case of vanishing $u(\xi)$ at $\xi \to - \infty$
\begin{equation}\label{e2}
u(\xi) = (1+v)^m\sum \limits_{i=0}^{N-m} a_i v(\xi)^i, \ \ \ v(\xi) = \tanh(\xi), \ \ m=1,\dots, N
\end{equation}
\item[3.)] Case of vanishing $u(\xi)$ at $\xi \to \pm \infty$
\begin{eqnarray}\label{e3}
u(\xi) &=& (1-v)^p(1+v)^q\sum \limits_{i=0}^{N-p-q} a_i v(\xi)^i, \ \ \ v(\xi) = \tanh(\xi), \nonumber \\
&&\ p(\ne 0) + q(\ne 0) = 2, \dots, N
\end{eqnarray}
\begin{proposition}
The Tanh-method is particular case of the SEsM methodology for the case of use of 1 simplest equation (the
differential equation of the function $\tanh(\xi)$) and for the following forms of the function $u(\xi)$:
\begin{enumerate} 
\item
$u(\xi) = \sum \limits_{i=0}^N a_i v(\xi)^i$ for the case without boundary conditions; 	
\item 
$u(\xi) = (1+v)^m\sum \limits_{i=0}^{N-m} a_i v(\xi)^i, \ \ m=1,\dots, N$ for the case of vanishing $u(\xi)$ at $\xi \to - \infty$
\item 
$u(\xi) = (1-v)^p(1+v)^q\sum \limits_{i=0}^{N-p-q} a_i v(\xi)^i, \ \ p(\ne 0) +
q(\ne 0) = 2, \dots, N$ for the case  of vanishing $u(\xi)$ at $\xi \to \pm \infty$
\end{enumerate}
\end{proposition}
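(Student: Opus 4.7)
The plan is to proceed by direct verification, since the content of the proposition is essentially identification rather than deep mathematics. First I would identify the simple equation associated with $v(\xi)=\tanh(\xi)$, namely $\frac{dv}{d\xi}=1-v^2$. This is a Riccati-type first-order ODE and therefore qualifies as a ``simple equation'' in the sense of SEsM; moreover $\tanh$ is picked out (up to translation) by a normalization at the origin, so Steps 3 and 5 are accounted for. In the language of Step 1 the transformation is trivial, $T(F)=F$, and in Step 2/Step 4 there is a single underlying function ($N=1$).

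Next I would treat the three cases separately. For case (1), the ansatz $u(\xi)=\sum_{i=0}^N a_i v(\xi)^i$ is literally the finite power series of equation~(\ref{i2}) with $\nu_1=0$, $\nu_2=N$, and at the same time a specialization of the polynomial form~(\ref{m2}) to a single simple function, so the identification is immediate. For cases (2) and (3), I would argue in one of two equivalent ways: either expand the prefactors $(1+v)^m$ and $(1-v)^p(1+v)^q$ by the binomial theorem, so that $u$ again becomes a polynomial in $v$ of degree $N$ whose coefficients are linear in $\{a_i\}$ (and arranged precisely so that $u(\pm 1)=0$, encoding the boundary behaviour at $\xi\to\pm\infty$ because $\tanh\to\pm1$); or invoke directly the phrase in Step 2 that $F(f_1,\dots,f_N)$ ``can have another form'' and the phrase in Step 4 that ``the kinds of the functions $A$, $B$, $\dots$ are not prescribed,'' which explicitly allows factored-polynomial ans\"atze of this shape.

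Finally I would note that, once the ansatz and the simple equation are in place, the remaining SEsM steps reduce exactly to the standard Tanh-method workflow: substitution into ${\cal DE}u(\xi)=0$ using $dv/d\xi=1-v^2$ produces a polynomial in $v$; the balance procedure of Step 6 fixes $N$ (and, in cases (2) and (3), fixes the admissible pairs $(m)$ and $(p,q)$ from the stated ranges); and Step 7 then yields the algebraic system for the $a_i$ whose solutions give the exact solutions sought by the Tanh-method. I do not expect any genuine obstacle in the proof; the only mildly delicate point is justifying that the factored ans\"atze of cases (2) and (3) lie within SEsM even though they are not the bare polynomial~(\ref{i2}), and the two alternative arguments above both dispose of this.
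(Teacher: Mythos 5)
Your proposal is correct and follows essentially the same route as the paper: take the single simple equation $\frac{dv}{d\xi}=1-v^2$ with solution $v(\xi)=\tanh(\xi)$ and adopt the three stated ans\"atze for $u(\xi)$, whereupon SEsM reduces to the Tanh-method. The paper's own proof is in fact terser than yours --- it simply asserts the reduction without the binomial-expansion argument or the explicit mapping onto Steps 1--7 --- so your extra justification of the factored ans\"atze in cases (2) and (3) only strengthens, and does not diverge from, the published argument.
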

\begin{proof}
Let us consider the case of 1 simple equation. Let the simple equation be $\frac{dv}{d\xi} = 1 - v^2 $.
The solution of this simple equation is $v(\xi) = \tanh(\xi)$. Let us set the function $u(\xi)$ from the
SEsM methodology as
\begin{enumerate} 
	\item
	$u(\xi) = \sum \limits_{i=0}^N a_i v(\xi)^i$ for the case without boundary conditions; 	
	\item 
	$u(\xi) = (1+v)^m\sum \limits_{i=0}^{N-m} a_i v(\xi)^i, \ \ m=1,\dots, N$ for the case of vanishing $u(\xi)$ at $\xi \to - \infty$
	\item 
	$u(\xi) = (1-v)^p(1+v)^q\sum \limits_{i=0}^{N-p-q} a_i v(\xi)^i, \ \ p(\ne 0) +
	q(\ne 0) = 2, \dots, N$ for the case  of vanishing $u(\xi)$ at $\xi \to \pm \infty$
\end{enumerate}
Then the SEsM methodology is reduced to the Tanh-method, i.e., the Tanh-method is particular case
of the SEsM methodology.
\end{proof}
\end{description}
\subsection{The Modified Method of Simplest Equation as particular case of the SEsM methodlogy}
The steps of the methodology of the Modified Method of Simplest Equation for obtaining particular traveling wave  
solutions of a NPDE reduced to ${\cal{DE}}u(\xi)=0$ are:
\begin{enumerate}
	\item
	 $u(\xi)$  is represented as a function of other function $v$
	that is solution of some ordinary differential equation (the 
	simplest equation). The form of the function $u(v)$ is can be different. One example is
	\begin{equation}\label{mx2}
	u = \sum \limits_{i=-M}^N \mu_n v^n 
	\end{equation}
	$\mu$ is a parameter. In the most cases one uses $M=0$.
	\item
	The application of Eq.(\ref{mx2}) to ${\cal{D}}u(\xi)=0$ transforms the left-hand side of 
	this equation. Let the result of this transformation  be a function that is a sum of terms where each 
	term contains some function multiplied by a coefficient. This coefficient contains some of the 
	parameters of the solved equation and some of the parameters of the solution. In the most cases
	a balance procedure must be applied in order to ensure that the above-mentioned relationships
	for the coefficients contain more than one term ( e.g., if the result of the transformation 
	is a polynomial then the balance procedure has to ensure that the coefficient of each 
	term of the polynomial is a relationship that contains at least two terms).
	This balance procedure leads to one  relationship among the parameters 
	of the solved equation and parameters of the solution. The  relationship is called 
	\emph{balance equation}. 
	\item
	We may obtain a nontrivial solution of ${\cal{DE}}u(\xi)=0$  if all coefficients mentioned above are
	set to $0$. This condition usually leads to a system of nonlinear algebraic equations for the 
	coefficients of the solved nonlinear PDE and for the coefficients of the solution. Any nontrivial 
	solution of this algebraic system leads to a solution the studied  nonlinear partial differential 
	equation. Usually the above system of algebraic equations contains many equations that have to 
	be solved with the help of   a computer algebra system. 
\end{enumerate}
\begin{proposition}
The Modified Method of Simplest Equation is particular case of the SEsM methodology for the case of
use of 1 simple equation (this one used in the corresponding realization of the Modified Method of Simplest Equation) and appropriate choice of $u(\xi)$ (this one used in the corresponding realization of the Modified Method of Simplest Equation).
\end{proposition}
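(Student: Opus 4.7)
The plan is to verify this by direct pattern-matching between the seven steps of SEsM laid out in Section 2 and the three steps listed for the Modified Method of Simplest Equation at the start of this subsection. No new mathematical content is needed; I only have to choose the free options inside SEsM so that each SEsM step collapses onto the corresponding MMSE step, in analogy with the way the previous three propositions were handled.

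First, in Step 1 of SEsM I would take the trivial transformation $u(x,\dots,t)=F(x,\dots,t)$, since MMSE involves no preliminary change of dependent variable, and then apply the traveling-wave ansatz $\xi=\alpha x+\beta t$ so that the working equation becomes the ODE ${\cal DE}u(\xi)=0$ on which MMSE operates. In Step 2 of SEsM I would set $N=1$ and specialize the general multilinear form (\ref{m2}) to a single-variable polynomial $F=\sum_{i=0}^{N}\mu_i f_1^{\,i}$, identifying $f_1$ with the MMSE function $v$; this reproduces the ansatz (\ref{mx2}). Step 3 can then be skipped, since the simplest equation of MMSE is already an ODE in $\xi$; Step 4 becomes the trivial identification $a(\xi)=v(\xi)$; and Step 5 declares $v$ to be the solution of the simplest equation chosen in the given realization of MMSE.

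Finally, Steps 6 and 7 of SEsM coincide verbatim with the balance procedure and the nonlinear algebraic system described in MMSE Steps 2 and 3: substituting the ansatz produces an expression whose functional coefficients must vanish, the balance equation arises from requiring that each such coefficient contain at least two terms, and any nontrivial solution of the resulting algebraic system yields a solution of the original NPDE. With these explicit choices SEsM reduces literally to MMSE, establishing the proposition.

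The main obstacle is not technical but conceptual: the argument is pure bookkeeping, and the only point requiring care is verifying that MMSE's possibly Laurent ansatz (\ref{mx2}) with $M>0$ is covered by Step 2 of SEsM as formulated. This is handled by noting that SEsM explicitly permits arbitrary forms of $F(f_1,\dots,f_N)$ and that (\ref{m2}) is cited merely as one illustrative choice, so a Laurent polynomial in $v$ is an admissible $F$; equivalently one may introduce $1/v$ as a second generator and recover the negative powers from the multilinear form itself.
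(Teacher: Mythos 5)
Your proposal is correct and follows essentially the same route as the paper: choose one simple equation in SEsM equal to the simplest equation of the given MMSE realization and take the same form of $u(\xi)$, whereupon SEsM collapses to MMSE. The paper's own proof states this in two sentences without the step-by-step matching or the remark about the Laurent ansatz (\ref{mx2}), so your version is simply a more explicit rendering of the same argument.
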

\begin{proof} 
Let us consider the SEsM methodology for the case of one simple equation. Let us choose this simple equation to be the same as the simplest equation used in the realization of the Modified method of simplest
equation under consideration. Let us choose the form of the function $u(\xi)$ in the SEsM methodology
to be the same as in the realization of the Modified Method of Simplest Equation under consideration. 
Then the SEsM methodology is reduced to the corresponding realization of the Modified Method of Simplest Equation. 
The above procedure can be performed for any realization of the Modified Method of Simplest Equation. 
Thus the Modified Method of Simplest Equation is a particular case of the SEsM methodology.
\end{proof} 
The last proposition shows that numerous methods based on different single simple equations, e.g., for the elliptic function of Weierstrass, for the elliptic function of Jacobi, etc., and on solution $u(\xi)$
constructed as power series or other functions of these solutions of the corresponding simplest equation,
are particular case of the SEsM methodology for the case of 1 simple equation. 
\subsection{A very general case that is particular case of the SEsM methodology}
\begin{proposition}
	Let us consider any method of solving the equation ${\cal{DE}}u(x,t)=0$ that is based on
	solutions $v_1(\xi_1), \dots,v_n(\xi_n)$, $\xi_1 = \alpha_i x + \beta_i t + \gamma_i$ ($\alpha_i$,
	$\beta_i$,$\gamma_i$: parameters) of $n$ simple differential equations ${\cal{O}}_iu_i(\xi_i)=0$,
	$i=1,\dots,n$ and
	let the function $u(x,t)$ be searched as an arbitrary combination of these simple equations. Then
	this method is a particular case of the SEsM methodology. 
\end{proposition}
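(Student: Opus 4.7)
The plan is to match each ingredient of the generic method described in the proposition with a step of the SEsM procedure laid out in Section 2, and thereby exhibit the method as a specialization of SEsM. No computation is really required; the statement is essentially a structural one, and the work consists of identifying the right dictionary.

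First, I would align the traveling-wave ansatz $\xi_i = \alpha_i x + \beta_i t + \gamma_i$ with Step 3 of SEsM, whose explicit purpose is to reduce simple PDEs in $(x,\dots,t)$ to ordinary differential equations in single wave variables. The $n$ simple differential equations ${\cal O}_i u_i(\xi_i) = 0$ with solutions $v_i(\xi_i)$ then play the role of the simple equations appearing in Step 5, and Step 4 can be trivialized by taking the functions $A, B, \dots$ to be identities (or folded into the combining function described below if the generic method composes the $v_i$ in two stages). The outer change of variables in Step 1 can be set to the identity, $T(F) = F$, unless the generic method itself starts with a preliminary transformation of $u$, in which case that transformation is absorbed into $T$.

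Next, I would invoke Step 2 to accommodate the phrase ``arbitrary combination.'' The key observation is that the paper explicitly declines to prescribe a form for $F(f_1,\dots,f_N)$; any function is admissible. So I set $f_i(x,\dots,t) = v_i(\xi_i)$ and let $F$ be precisely the combination used by the generic method. Steps 6 and 7 then apply verbatim: the substitution produces some expression in the $v_i$, one performs the balance and solves the resulting algebraic system for the free parameters. This is exactly the procedure carried out, with concrete choices of $F$ and of the simple equations, in the four preceding subsections ($(G'/G)_N$, Exp-function, Tanh, MMSE), so those subsections serve as worked illustrations of the general correspondence.

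The only real obstacle is conceptual rather than technical: one must argue that ``arbitrary combination'' is genuinely covered by Step 2. I would handle this by pointing out that the SEsM recipe is deliberately formulated with $T$, $F$, and $A, B, \dots$ left unspecified, so any combining procedure the generic method might use is, by fiat, an allowed instance of Step 2 (and possibly Steps 1 and 4). Because of this deliberate openness, the proof reduces to the remark that the generic method is obtained from SEsM by reading off the corresponding specific choices, and I would close on exactly that observation.
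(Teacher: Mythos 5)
Your proposal is correct and takes essentially the same route as the paper: the paper's own proof simply specializes SEsM to the given $n$ simple equations and to the given combination as the function $u(x,t)$, and observes that this specialization coincides with the method in question. Your version merely makes explicit the step-by-step dictionary (identity $T$, arbitrary $F$ in Step 2, the traveling-wave ans\"atze in Step 3, identity $A,B,\dots$ in Step 4) that the paper leaves implicit.
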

\begin{proof}
Let us cosider the SEsM methodology based on $n$ simple differential equations ${\cal{O}}_iu_i(\xi_i)=0$,
$i=1,\dots,n$ and on the function $u(x,t)$ that can be any combination of these simple equations. Then the
SEsM methodology is reduced to the corresponding method, i.e., the corresponding method is particular case of the SEsM methodology.
\end{proof}
Note that in the above proposition the differential equations  ${\cal{O}}_iu_i(\xi_i)=0$ are
ordinary differential equations. The SEsM methodology extends even to the case when the $n$
simple equations are partial differential equations and even to the case when the coefficients
that participate together with the solutions $v(x,t)$ of these equations depend on $x$ and $t$ and
even to the case of stochastic simple differentiel equations, simple differential equations with fractional derivatives, etc. 
\par 
Let us now describe just one consequence of the last proposition. Namely
\begin{proposition}
The method of Fourier series for obtaining exact and approximate solutions of linear partial differential equations is particular case of SEsM.
\end{proposition}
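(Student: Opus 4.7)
The plan is to apply the immediately preceding proposition, which already shows that any method representing $u(x,t)$ as an arbitrary combination of solutions of $n$ simple differential equations is a particular case of SEsM, together with the remark following it that permits the simple equations to be PDEs and the number of simple equations to be infinite (the latter is also explicitly allowed in Step 2 of the methodology). The task is therefore to recast the Fourier-series construction in this language, after which no further computation is needed.

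First, I would recall that the Fourier-series method for a linear PDE $\mathcal{L}u=0$ on a suitable domain proceeds by separation of variables: writing $u(x,t)=X(x)T(t)$ yields two linear ODEs indexed by an eigenvalue $\lambda_n$ fixed by the boundary conditions, with eigenfunctions $X_n(x)$ (sines, cosines, or complex exponentials in the classical cases) and companion time-factors $T_n(t)$. Each $X_n$ satisfies a simple ODE such as $X_n'' + \lambda_n X_n = 0$, and each $T_n$ satisfies the ODE produced by the separation constant. These are simple ordinary differential equations in the sense of Step 3, certainly simpler than the original PDE, so the role of the functions $v_1(\xi_1),\dots,v_n(\xi_n)$ of the preceding proposition is played here by the products $X_n(x)T_n(t)$ (indexed by $n=1,2,\dots$).

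Next, I would identify the Fourier series $u(x,t) = \sum_{n} c_n X_n(x) T_n(t)$ with the representation (\ref{m2}) of Step 2: one keeps only the linear terms $\sum_{i_1} \beta_{i_1} f_{i_1}$, sets all higher-order coefficients $\gamma_{i_1,i_2},\dots,\sigma_{i_1,\dots,i_N}$ to zero, and identifies $\beta_{i_1}$ with the Fourier coefficients $c_n$. This ansatz therefore fits exactly the hypothesis of the preceding very general proposition, and an invocation of that proposition completes the proof. The final Step 7 of SEsM, where the coefficients are fixed by setting certain expressions to zero, specializes here to the determination of $c_n$ from initial or boundary data via orthogonality of the eigenfunctions, so the algebraic system of SEsM decouples into the explicit integral formulas for the Fourier coefficients.

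The only real obstacle is conceptual rather than technical: one must accept that the SEsM framework as written admits infinitely many simple equations and purely linear combinations of their solutions, and one may need to note (as the authors already do in the remark preceding this proposition) that in the linear-PDE setting the simple equations themselves can be taken to be PDEs arising from separation of variables. Once these observations are granted, the Fourier-series method appears as one of the most transparent specializations of SEsM, and no separate computation is required.
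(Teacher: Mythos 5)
Your argument is correct in substance but follows a genuinely different route from the paper's. The paper does not go through separation of variables at all: it first reduces the PDE to an ODE via the traveling-wave ansatz $\xi=\alpha x+\beta t$, then declares the simple equations to be the infinite family $d^2v_k/d\xi^2=-k^2v_k$ and $d^2w_k/d\xi^2=-k^2w_k$ (so $v_k=\cos(k\xi)$, $w_k=\sin(k\xi)$) and takes $u(\xi)=\frac{a_0}{2}+\sum_{k=1}^{\infty}[a_kv_k+b_kw_k]$ as a direct instance of the linear part of the representation in Step 2; the standing-wave (separation-of-variables) case and general orthogonal expansions are only mentioned as analogous. You instead start from separation of variables, take the simple functions to be the products $X_n(x)T_n(t)$, and invoke the preceding ``very general'' proposition rather than re-specializing SEsM from scratch. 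Your version buys more: it covers boundary-value problems, eigenfunction expansions with general $\lambda_n$, and the determination of the coefficients from orthogonality (which you correctly identify as the degenerate form of Step 7), none of which the paper spells out. The cost is that you need the extra concessions you flag yourself --- infinitely many simple equations and, in your framing, either simple PDEs for the products $X_nT_n$ or the product terms $\gamma_{i_1,i_2}f_{i_1}f_{i_2}$ of the Step 2 expansion with $f$'s depending on different variables --- whereas the paper's traveling-wave version needs only ODEs in the single variable $\xi$ and the purely linear terms of the expansion. Both arguments are at the same (informal, definitional) level of rigor as the rest of the paper, so neither has a genuine gap; yours is closer to what the paper defers to the closing sentence about standing waves and orthogonal functions.
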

\begin{proof}
Fourier series are widely used for seeking solutions to  ordinary differential equations  and partial differential equations. Let us consider the differential equation ${\cal{DE}}u(\xi)=0$ ($\xi = \alpha x + \beta t$, $\alpha$ and $\beta$ - parameters) and let us search the solution of this equation as
\begin{equation}\label{fourier}
u(\xi) = \frac{a_0}{2} + \sum \limits_{k=1}^\infty [a_k v_k(\xi) + b_k w_k(\xi)]
\end{equation}
where $a_0$, $a_k$ and $b_k$ are parameters ($k=1,\dots, \infty$) and the functions $v_k$ and $w_k$
are different solutions of the simple equations of the same kind:
\begin{equation}\label{fourier1}
\frac{d^2 v_k}{d \xi^2} = -k^2 v_k; \ \ \ \frac{d^2 w_k}{d \xi^2} = -k^2 w_k 
\end{equation}
i.e. ($v_k(\xi)= \cos(k \xi)$; $w_k(\xi) = \sin(k \xi)$). Then the SEsM is reduced to the
method of Fourier series for searching of traveling wave solutions of differential equations.
In analogous way we can show that the Fourier method for searching of standing wave solutions
of differential equations is particular case of SEsM. Even more in analogous way we can show that the method of 
orthogonal functions for searching of solutions of differential equations is a particular case of SEsM.
\end{proof}

\section{Concluding remarks}
In this text we have discussed the SEsM methodology for obtaining exact and approximate solutions
of nonlinear partial differential equations.
We have shown that several methods for obtaining exact particular solutions of nonlinear partial 
differential equations are particular cases of the SEsM methodology. Many more methods are
particular cases of this methodology.  We shall report additional material on this topic elsewhere.  

\nocite{*}
\bibliographystyle{aipnum-cp}%


\end{document}